\documentclass[12pt,reqno]{amsart}
%%%%%%%%%%%%%%%%%%%%%%%%%%%%%%%%%%%%%%%%%%%%%%%%%%%%%%%%%%%%%%%%%%%%%%%%%%%%%%%%%%%%%%%%%%%%%%%%%%%%%%%%%%%%%%%%%%%%%%%%%%%%%%%%%%%%%%%%%%%%%%%%%%%%%%%%%%%%%%%%%%%%%%%%%%%%%%%%%%%%%%%%%%%%%%%%%%%%%%%%%%%%%%%%%%%%%%%%%%%%%%%%%%%%%%%%%%%%%%%%%%%%%%%%%%%%
\usepackage{amssymb}
\usepackage{float}
\usepackage{graphicx}
\usepackage{textcomp}
\usepackage{amsthm, mathrsfs}
 \usepackage{xcolor}
%\hyperlink{ref_hypertarget}{\pageref{label_ref_hypertarget}}
%\usepackage{hyperref} % <====================== last packge to be called
%\hypersetup{linkbordercolor=green}
\usepackage[colorlinks=true, linkcolor=red, citecolor=blue]{hyperref}
%\usepackage{natbib}
%usepackage[sort&compress,square,comma,authoryear]{natbib}
%\bibliographystyle{unsrt}
% makes color citations
%\usepackage[dvips,dvipdfm,colorlinks=true,urlcolor=blue,citecolor=red,linkcolor=red,bookmarks=true]{hyperref}
\setcounter{MaxMatrixCols}{10}
%\usepackage[colorlinks=true, linkcolor=blue, citecolor=blue]{hyperref}
%TCIDATA{OutputFilter=LATEX.DLL}
%TCIDATA{Version=5.50.0.2890}
%TCIDATA{<META NAME="SaveForMode" CONTENT="1">}
%TCIDATA{BibliographyScheme=Manual}
%TCIDATA{Created=Friday, April 04, 2008 20:46:43}
%TCIDATA{LastRevised=Sunday, August 20, 2017 19:49:37}
%TCIDATA{<META NAME="GraphicsSave" CONTENT="32">}
%TCIDATA{<META NAME="DocumentShell" CONTENT="Articles\SW\AMS Proceedings Article">}
%TCIDATA{Language=American English}
%TCIDATA{CSTFile=amsart.cst}
\DeclareSymbolFont{rmlargesymbols}{OMX}{mdbch}{m}{n}
% or \DeclareSymbolFont{rmlargesymbols}{U}{euex}{m}{n}
\DeclareMathSymbol{\rmintop}{\mathop}{rmlargesymbols}{82}

\theoremstyle{plain}

\newtheorem{corollary}{Corollary}

\newtheorem{proposition}{Proposition}

\newtheorem{theorem}{Theorem}
\numberwithin{equation}{section}

\linespread{1.10}
\setlength{\topmargin}{-0.5in}
\setlength{\textheight}{9.7in}
\setlength{\textwidth}{7in}
\setlength{\oddsidemargin}{-0.25in}
\setlength{\evensidemargin}{-0.25in}
\usepackage[top=1.7cm, bottom=1.7cm, left=2.5cm, right=2.5cm]{geometry}

\begin{document}
\title{$q$-deformed coherent states associated with the sequence $x_n^{q,\alpha }=(1+\alpha q^{n-1})[n]_q$}
\maketitle
\begin{center}
\author{Othmane El Moize $^{*}$, Zouha\"ir Mouayn $^{\flat}$ \ and
\ Khalid Ahbli $^{\sharp}$ }\\
\vspace*{2em}
\begin{scriptsize}
$^{*}$ 46 Lot El Youssr 2,  26100, Berrechid, Morocco\\
$^{\flat}$ Department of Mathematics, Faculty of Sciences and Technics (M’Ghila),\vspace*{0.2mm}\\ Sultan Moulay
Slimane University, P.O. Box. 523, Béni Mellal, Morocco\vspace*{-0.4mm}\\
$^{\sharp}$ Route principale Lqliaa, P.O. Box. 456, Inzegane Ait Melloul, Agadir, Morocco.
\end{scriptsize}
\end{center}
\begin{abstract}
We introduce  new  generalized $q$-deformed coherent states ($q$-CS) by replacing the $q$-factorial of $[n]_q!$ in the series expansion of the classical $q$-CS by the generalized factorial $x_n^{q,\alpha}!$ where $x_n^{q,\alpha}=(1+\alpha q^{n-1})[n]_q$. We use the shifted operators method based on the  sequence  $x_n^{q,\alpha}$ to obtain a  realization in terms of  Al-Salam-Chihara polynomials for the basis vectors of the Fock space carrying the constructed $q$-CS. These new states interpolate between the  $q$-CS of Arik-Coon type ($\alpha=0$, $0<q<1$) and a set of  coherent states of Barut–Girardello type for the Meixner-Pollaczek oscillator ($\alpha\neq 0$, $q\to 1$). We also discus their  associated Bargmann type transforms.  
\end{abstract}
\begin{scriptsize}
\textbf{KEYWORDS}: $q$-deformed coherent states; Arik-Coon space;  Al-Salam-Chihara polynomials; Bargmann type transform; shifted operators method; Barut–Girardello coherent states.
\\ \medskip
\textbf{AMS CLASSIFICATION}: 33C45, 44A15, 46E22, 32C81, 05A30, 3D45. 
\end{scriptsize}
\section{Introduction}
Coherent states (CS) have been introduced by Schr\"{o}dinger as states which
$  $behave in many respects like classical states \cite{schro}. The \textit{canonical} CS of the harmonic oscillator denoted $\vartheta_z$ and labeled by points $z\in 
\mathbb{C}$, may be defined in four ways: (i) as eigenstates of the
annihilation operator $a$, (ii) by applying the displacement operator $%
e^{za^{\ast }-\bar{z}a}$ on the vacuum state $|0\rangle $ such that $%
a|0\rangle =0$, where $a^{\ast }$ is the Hermitian conjugate of $a$ satisfying the relation $[a,a^*]=1$, (iii)
by finding states that minimize the Heisenberg uncertainty principle and
(iv) as a specific superposition of eigenstates $\varphi _{n}$ of the harmonic
oscillator (HO) number operator $a^{\ast }a$ as 
\begin{equation}
\vartheta_{z}:=\left( e^{z\bar{z}}\right) ^{-1/2}\sum_{n=0}^{+\infty }\frac{\bar{z%
}^{n}}{\sqrt{n!}}\varphi _{n},  \label{CS01}
\end{equation}%
where the $\varphi_n$'s span a Hilbert space $\mathscr{H}$ usually called Fock space.

The generalized CS (GCS) may be constructed by extending one of the four
aforementionned ways defining the canonical ones or by superposing CS themselves \cite{Birula}. The terminology of GCS was first appeared and studied in \cite{Birula,Stoler} in connection with states discussed in \cite{TG65}. GCS belong to a larger class of states named \textit{nonclassical }which\textit{\ }are involved in quantum optics and in other fields
ranging from solid states to cosmology. These states exhibit some purely
quantum-mechanical properties, such as squeezing and antibunching \cite{dodo02}.

GCS are usually associated with potential algebras other than the
oscillator one \cite{KS85, pere86}. An important example is provided by the $q$%
-deformed CS ($q$-CS for brevity) related to deformations of boson
operators \cite{ACO, Bi89, Mac89}. $q$-CS are usually constructed in a way that
they reduce to their standard counterparts as $q\rightarrow 1$.  Among $q$-CS, there are those associated with the relation $a_{q}a_{q}^{\ast }-qa_{q}^{\ast }a_{q}=1$ with $0<q<1$, where  $a_{q}$ are often termed maths-type $q$-bosons operators \cite%
{Solo94, JKAS94}. These $q$-CS may be defined \cite{ACO} through a $q$-analog of the number states expansion \eqref{CS01} as 
\begin{equation}  \label{CS02}
\vartheta_{z}^{q}:=\left( e_{q}(z\bar{z})\right) ^{-\frac{1}{2}%
}\sum_{n=0}^{+\infty }\frac{\bar{z}^{n}}{\sqrt{[n]_{q}!}}\varphi _{n}^{(q)},
\end{equation}%
for $z\in\mathbb{C}_q:=\{\zeta\in\mathbb{C},\:(1-q)\zeta\bar{\zeta}<1\}$ where 
\begin{equation}\label{expodef}
e_{q}(\xi):=\sum_{n=0}^{+\infty }\frac{\xi^{n}}{[n]_{q}!}=\frac{1}{(\xi(1-q);q)_{\infty}}
\end{equation}
being a $q$-exponential function  and $[n]_q!=\displaystyle\prod_{k=1}^{n}[k]_q$ where 
\begin{equation}
[k]_q=\frac{1-q^k}{1-q}\to k \quad \mathrm{as}\quad q\to 1.
\end{equation}
Details on the basic notations of $q$-calculus can be found in \cite{KS, Ism29,GR}. Here, the $\varphi _{n}^{(q)}$'s  span a Hilbert space $\mathscr{H}_{q}$ which stands for a $q$-analog of the Fock space $\mathscr{H}$.\medskip

In this paper, we construct a new class of generalized $q$-CS by replacing  the $q$-factorial $[n]_q!$ in the denominator following the summation sign in \eqref{CS02} by the generalized factorial
\begin{equation}\label{x_n^gamma!}
x_{n}^{q,\alpha }!:=(-\alpha;q)_n[n]_q!,\quad x_{0}^{q,\alpha }=1,\quad  0< q <1,\:-1<\alpha<q.
\end{equation}
We discuss the resolution of the identity operator on $\mathscr{H}_q$ satisfied by these states in connection with a moment problem associated with the sequence of numbers $x_{n}^{q,\alpha}=(1+\alpha  q^{n-1})[n]_q$, $n=0,1,2,\cdots$. We also  give a  realization  in terms of the Al-Salam-Chihara polynomials  of the basis $\varphi _{n}^{(q)}$ by using the shift operators method \cite{AI2012} based on  the sequence $x_{n}^{q,\alpha}$. The constructed $q$-CS interpolate between the $q$-CS of Arik-Coon type \cite{SOZ18}, which correspond to  $\alpha=0$ and  $0<q<1$ and a set of  coherent states of Barut-Girardello type for the Meixner-Pollaczek oscillator \cite{Bor}, which we recover at the limit $q\to 1$ with $\alpha\neq 0$. We also discuss the  associated coherent states transforms and some particular cases of parameters $q$ and $\alpha$.

The paper is organized as follows. In Section 2, we  introduce a new class of  generalized $q$-CS for which we prove the resolution of the identity  property. Section 3 is devoted to  a polynomial realization of the basis $\{\varphi_n^{(q)}\}_{n=0}^\infty$. In Section 4, we define the  Bargmann type  transform associated with our  $q$-CS.

\section{A class of generalized $q$-CS}
Let  $q,\, \alpha$ such that  $0< q <1$ and $-1<\alpha<q$ be fixed parameters. We define a set of G$q$-CS associated with the sequence $x_{n}^{q,\alpha }$ via the superposition 
\begin{equation}
\vartheta_{z}^{q,\alpha} :=\left( \mathcal{N}_{q,\alpha }(z\bar{z}%
)\right) ^{-\frac{1}{2}}\sum_{n=0}^{+\infty }\frac{\bar{z}^{n}}{\sqrt{%
x_{n}^{q,\alpha }!}} \varphi _{n}^{(q)} ,\ \ n=0,1,2,...\text{ },
\label{NLCS_x_n,alpha,q}
\end{equation}
where  $\varphi _{n}^{(q)}$ are as in \eqref{CS02}. We obtain, by direct calculations,  the normalization factor in \eqref{NLCS_x_n,alpha,q} as  
\begin{equation}\label{norfac}
\mathcal{N}_{q,\alpha}\left( r^2\right)={}_2\phi_1\left(\begin{array}{c}0,0\\
 -\alpha\end{array}\Big|q; r^2(1-q)\right),\quad r^2=z\bar{z},
\end{equation}
where 
\begin{equation}
  \setlength\arraycolsep{1pt}
{}_m \phi_s\left(\begin{matrix}a_1,...,a_m \\ b_1,...,b_s \end{matrix}\left|q;\xi\right.\right):=\displaystyle\sum_{k=0}^{\infty} \frac{(a_1,...,a_m;q)_k}{(b_1,...,b_s;q)_k}(-1)^{(1+s-m)k}q^{(1+s-m)\binom{k}{2}}\frac{\xi^k}{(q;q)_k}
  \end{equation}
  is the basic hypergeometric series (\cite{KS}, p.11). The expression \eqref{norfac} is finite for    $z\in\mathbb{C}_{q}:=\{\xi\in\mathbb{C},\xi\bar{\xi}<(1-q)^{-1}\}$. More precisely, we show the following crucial property.
\begin{proposition}
For $(q,\, \alpha)$ such that   $0< q <1$ and $-1<\alpha<q$, the states \eqref{NLCS_x_n,alpha,q} satisfy the resolution of the identity operator on $\mathcal{H}$ as 
\begin{equation}
\mathbf{1}_{\mathcal{H}}=\int_{\mathbb{C}}|\vartheta_{z}^{q,\alpha}\rangle \langle \vartheta_{z}^{q,\alpha} | \mu _{q,\alpha }(d^2z),  \label{RI_x_n^gamma}
\end{equation}
with respect to the measure
\begin{equation}\label{RI_mes}
\mu_{q,\alpha }(d^2z)=\frac{1}{2\pi}\mathcal{N}_{q,\alpha }(r^2)\rho_{q,\alpha }(dr)\lambda_{[0,2\pi)}(d\theta),\: r\geq 0,\: \theta\in[0,2\pi),
\end{equation}
where $\lambda_{[0,2\pi)}$ is the Lebesgue measure on $[0,2\pi)$ and $\rho_{q,\alpha }$ is the radial measure given by 
\begin{equation}\label{MRI}
\rho_{q,\alpha }=(-\alpha,q;q)_{\infty}\sum_{n=0}^{\infty}\frac{q^n}{(q;q)_n}h_n(-\alpha q^{-1}|q)\delta_{(1-q)^{-\frac{1}{2}}q^\frac{n}{2}},
\end{equation}
in terms of the Rogers-Szeg\"o polynomials $h_n(z|q)$.
\end{proposition}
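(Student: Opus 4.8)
The plan is to reduce the operator identity \eqref{RI_x_n^gamma} to a one–dimensional Stieltjes moment problem and then exhibit an explicit solution. First I would substitute the series \eqref{NLCS_x_n,alpha,q} into the right–hand side of \eqref{RI_x_n^gamma} and integrate in the angular variable using $\int_0^{2\pi}e^{i(n-m)\theta}\,d\theta=2\pi\delta_{nm}$; because of the normalization factor $\mathcal N_{q,\alpha}(r^2)$ appearing in both $\vartheta_z^{q,\alpha}$ and in $\mu_{q,\alpha}$, the two copies cancel against one and leave a single $\mathcal N_{q,\alpha}(r^2)$ in the integrand. What remains is the requirement
\begin{equation}
\frac{1}{x_n^{q,\alpha}!}\int_{[0,\infty)} r^{2n}\,\rho_{q,\alpha}(dr)=1,\qquad n=0,1,2,\dots,
\end{equation}
i.e. $\rho_{q,\alpha}$ must have moments of even order equal to $x_n^{q,\alpha}!=(-\alpha;q)_n[n]_q!$. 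Writing $x_n^{q,\alpha}!=(1-q)^{-n}(-\alpha;q)_n(q;q)_n$ and substituting $t=(1-q)r^2$, this is equivalent to finding a measure on $[0,1]$ (the support of $\mu_{q,\alpha}$ forces this bounded support) whose $n$-th moment is $(-\alpha;q)_n(q;q)_n$. The atomic measure in \eqref{MRI} is supported exactly on the points $(1-q)^{-1/2}q^{n/2}$, so under $t=(1-q)r^2$ it becomes a measure on the geometric grid $\{q^n\}_{n\ge 0}\subset[0,1]$; the remaining task is to verify that the weights $(-\alpha,q;q)_\infty\,\tfrac{q^n}{(q;q)_n}h_n(-\alpha q^{-1}|q)$ reproduce the target moments.

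The key step is therefore the moment identity
\begin{equation}
(-\alpha,q;q)_\infty\sum_{n=0}^{\infty}\frac{q^n}{(q;q)_n}\,h_n(-\alpha q^{-1}|q)\,q^{kn}=(-\alpha;q)_k(q;q)_k,\qquad k=0,1,2,\dots
\end{equation}
To prove this I would use the generating function and known summation formulas for the Rogers–Szegő polynomials $h_n(z|q)=\sum_{j=0}^n\binom{n}{j}_q z^j$. Multiplying the left side by $w^k/(q;q)_k$, summing over $k$, and interchanging sums produces a product of two $q$-exponentials $e_q$ (equivalently $1/(\,\cdot\,;q)_\infty$ factors) via $\sum_n h_n(z|q)t^n/(q;q)_n=1/((t;q)_\infty(tz;q)_\infty)$; comparing the resulting closed form with $\sum_k (-\alpha;q)_k(q;q)_k w^k/(q;q)_k=\sum_k(-\alpha;q)_k w^k$, which by the $q$-binomial theorem equals $(-\alpha w;q)_\infty/(w;q)_\infty$, should yield the claim after identifying parameters. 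Alternatively, one can recognize that $(-\alpha;q)_k(q;q)_k$ are the moments of the Al-Salam–Chihara / $q$-Meixner–Pollaczek weight and invoke the orthogonality relation for those polynomials directly; since the paper is about to introduce the Al-Salam–Chihara realization in Section 3, this is the natural route and makes the appearance of $h_n(-\alpha q^{-1}|q)$ transparent.

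Finally I would check positivity: for $-1<\alpha<q<1$ one has $(-\alpha q^{-1};q)$-type arguments making $h_n(-\alpha q^{-1}|q)>0$ for all $n$ (the Rogers–Szegő polynomials have positive coefficients and the argument $-\alpha q^{-1}$ is handled by the sign constraint $-1<\alpha$, so $-\alpha q^{-1}>-q^{-1}$; one must argue the alternating terms still sum to something positive, e.g. via the known connection $h_n(-\alpha q^{-1}|q)=(-\alpha;q)_n\, {}_2\phi_1$-type positive expression, or via the fact that they are moments), and that $(-\alpha,q;q)_\infty>0$. Assembling these pieces: the angular integration collapses the double sum to a diagonal, the radial moment computation shows each diagonal coefficient is $1$, hence the right-hand side of \eqref{RI_x_n^gamma} acts as $\sum_n|\varphi_n^{(q)}\rangle\langle\varphi_n^{(q)}|=\mathbf 1_{\mathcal H}$, and convergence/interchange of sum and integral is justified by positivity (Tonelli) on $\mathbb C_q$. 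I expect the main obstacle to be the verification of the Rogers–Szegő moment identity together with the positivity of its coefficients; the measure-theoretic bookkeeping (Fubini–Tonelli, support in $[0,1]$) is routine once the grid substitution $t=(1-q)r^2$ is in place.
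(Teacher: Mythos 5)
Your reduction to the moment problem is exactly the paper's own argument: insert \eqref{NLCS_x_n,alpha,q}, let the normalization factor of the states cancel against the $\mathcal{N}_{q,\alpha}(r^2)$ built into $\mu_{q,\alpha}$, use the angular integral to keep only the diagonal, and demand $\int_0^{\infty}r^{2n}\rho_{q,\alpha}(dr)=(-\alpha;q)_n[n]_q!$. The only divergence is in how the moment problem is then settled: the paper quotes the solution wholesale from (\cite{NMT}, Theorem 3.11), while you propose to verify the moment identity yourself. That verification is correct in substance and is in fact immediate from the Rogers--Szeg\"o generating function you quote: setting $t=q^{k+1}$ and $z=-\alpha q^{-1}$ gives
\begin{equation*}
\sum_{n=0}^{\infty}\frac{q^{n}}{(q;q)_n}\,h_n(-\alpha q^{-1}|q)\,q^{kn}=\frac{1}{(q^{k+1};q)_{\infty}\,(-\alpha q^{k};q)_{\infty}},
\end{equation*}
and multiplying by $(-\alpha,q;q)_{\infty}$ and using $(q;q)_{\infty}/(q^{k+1};q)_{\infty}=(q;q)_k$ together with $(-\alpha;q)_{\infty}/(-\alpha q^{k};q)_{\infty}=(-\alpha;q)_k$ yields $(q;q)_k(-\alpha;q)_k=(1-q)^{k}x_k^{q,\alpha}!$, which is exactly the required $k$-th even moment once the atoms $r=(1-q)^{-1/2}q^{n/2}$ contribute $r^{2k}=(1-q)^{-k}q^{nk}$. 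So your self-contained check supplies a detail the paper leaves to a citation.

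Two cautions about your execution. First, the secondary route you sketch --- multiplying by $w^k/(q;q)_k$, summing over $k$, and comparing with $\sum_k(-\alpha;q)_k w^k$ ``by the $q$-binomial theorem'' --- misapplies that theorem: the $q$-binomial theorem sums $\sum_k (a;q)_k w^k/(q;q)_k$, not $\sum_k(a;q)_k w^k$; this detour should be dropped in favour of the direct substitution above, which already closes the argument. Second, the positivity of the weights, i.e.\ $h_n(-\alpha q^{-1}|q)>0$ when $0<\alpha<q$ so that the argument lies in $(-1,0)$, is genuinely needed for $\mu_{q,\alpha}$ to be a positive measure and is only gestured at in your write-up; it is part of what (\cite{NMT}, Theorem 3.11) certifies, so either invoke that reference for this point as the paper does, or supply an argument (for example, that $h_n(\cdot|q)$ has no zeros in $(-1,0)$). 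Neither issue affects the overall architecture, which coincides with the paper's.
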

\begin{proof}
Let us assume that the measure takes the form $\mu_{\alpha, q }(d^2z)=\frac{1}{2\pi}\mathcal{N}_{q,\alpha }(r^2)\rho_{q,\alpha }(dr)\lambda_{[0,2\pi)}(d\theta)$,  $z=r e^{i\theta},$ $r\geq 0$, $\theta\in[0,2\pi),$
where $\rho_{q,\alpha }$ is a positive radial measure to be determined. Using the expression of the states \eqref{NLCS_x_n,alpha,q}, the operator
\begin{equation}
\mathcal{O}=\int_{\mathbb{C}}|\vartheta_{z}^{q,\alpha}\rangle \langle \vartheta_{z}^{q,\alpha} | \mu_{q,\alpha}(d^2z)
\end{equation}
 can be written as
\begin{eqnarray}
\mathcal{O}&=&\sum\limits_{n,m=0}^{+\infty} |\varphi_n^{(q)}\rangle\langle \varphi_m^{(q)}|\left(  \int_0^{+\infty} \frac{r^{n+m}}{\sqrt{(-\alpha ;q)_n[n]_q!(-\alpha ;q)_m[m]_q!}}\rho_{q,\alpha }(dr)\:\frac{1}{2\pi} \int_0^{2\pi} e^{i(n-m)\theta}\lambda_{[0,2\pi)}(d\theta)\right)  \cr
&=&\sum\limits_{n=0}^{+\infty}|\varphi_n^{(q)}\rangle\langle \varphi_n^{(q)}|\frac{1}{(-\alpha ;q)_n[n]_q!}\left(  \int_0^{+\infty} r^{2n}\rho_{q,\alpha }(dr) \right) \label{O_MP}.
\end{eqnarray}
One now has to find  $\rho_{q,\alpha}$ such that
\begin{equation}
\int_0^{+\infty}r^{2n}\rho_{q,\alpha}(dr)=(-\alpha ;q)_n[n]_q!,\quad n=0,1,2,\cdots .
\end{equation}
 A solution to this moment problem can be found  in (\cite{NMT}, Theorem 3.11) as
\begin{equation}\label{rho(r)}
\rho_{q,\alpha }=(-\alpha,q;q)_{\infty}\sum_{n=0}^{\infty}\frac{q^n}{(q;q)_n}h_n(-\alpha q^{-1}|q)\delta_{(1-q)^{-\frac{1}{2}}q^\frac{n}{2}},
\end{equation}
where   the Rogers-Szeg\"o polynomials are defined by (\cite{Ism29}, p.318):
\begin{equation}
h_n(z|q)=\sum_{k=0}^{n}\frac{(q;q)_n}{(q;q)_k(q;q)_{n-k}}z ^k.
\end{equation}
By using the radial measure in  \eqref{rho(r)}, the operator $\mathcal{O}$ reduces to $\sum\limits_{n=0}^{+\infty}|\varphi_n^{(q)}\rangle\langle \varphi_n^{(q)}|$ which gives the identity operator on $\mathcal{H}$ since $\{\varphi_n^{(q)}\}$ is an orthonormal basis of $\mathcal{H}$. This completes the proof.
\end{proof}

\hspace*{-1em}\textbf{Remark 1.} When $\alpha =0$,  the measure given by \eqref{RI_mes} reduces to 
$$\mu_{q,0}(d^2z)=\frac{1}{2\pi}\lambda_{[0,2\pi)}(d\theta)\frac{(q;q)_{\infty}}{((1-q)r^2;q)_{\infty}}\sum_{n=0}^{\infty}\frac{q^n}{(q;q)_n}\delta\left(r-\frac{q^{n/2}}{\sqrt{1-q}}\right), \: z=re^{i\theta},$$
and the corresponding   G$q$-CS turn out to be the  $q$-deformed CS introduced by Arik and Coon \cite{ACO}, \cite{VM95}.\medskip 

Let $L^{2}(\mathbb{C}_q,\nu_{q,\alpha })$ denotes the space of complex-valued functions on  $\mathbb{C}_q$ which are square integrable with respect to the measure 
\begin{equation}
\nu_{q,\alpha}(d^2z)=\frac{1}{2\pi}\lambda_{[0,2\pi)}(d\theta)\rho_{q,\alpha}(dr),z=re^{i\theta},\: r\geq 0,\: \theta\in[0,2\pi),
\end{equation}
which  for $\alpha=0$, takes the form 
\begin{equation}
\nu_{q,0}(d^2z)=\frac{1}{2\pi}\lambda_{[0,2\pi)}(d\theta) (q;q)_{\infty}\sum_{n=0}^{\infty}\frac{q^n}{(q;q)_n}\delta\left(r-\frac{q^{n/2}}{\sqrt{1-q}}\right).
\end{equation}

Note that, the closure in $L^{2}(\mathbb{C}_q,\nu_{q,\alpha })$ of the  linear span of the monomials $\left\lbrace\dfrac{z^{n}}{\sqrt{%
x_{n}^{q,\alpha }!}}\right\rbrace_{n \geq 0}$  is a Hilbert
space whose reproducing kernel is given   by
\begin{eqnarray}\label{RKdem1}
K_{q,\alpha}(z,w)&=&\sum_{j=0}^{\infty}\frac{(z\bar{w})^j}{(-\alpha;q)_j[j]_q!}=\sum_{j= 0}^{\infty} \frac{(z\bar{w})^j}{(-\alpha;q)_j}\;\frac{(1-q)j}{(q;q)_j}\cr
&=&{}_2\phi_1\left(\begin{array}{c}0,0\\
 -\alpha\end{array}\Big|q; z\bar{w}(1-q)\right)
\end{eqnarray}
and  will be denoted by $\mathcal{A}_{q,\alpha }^{2}(\mathbb{C})$. This space may be viewed as a generalization (with respect  to the parameter  $\alpha$) of the classical Ar\"{\i}k-Coon space $\mathcal{A}_{q,0}^{2}(\mathbb{C})$ which stands for  the completed space of entire functions in  $L^{2}(\mathbb{C}_q,\nu_{q,0})$ and  for which the monomials  $([n]_q!)^{-1/2}z^n$ form an orthonormal basis \cite{ACO}. 
\section{A polynomials realization of the basis $\{\varphi_n^{(q)}\}_{n=0}^\infty$ }
In this section we  use  the \textit{shift operator method} as discussed by Ali and Ismail \cite{AI2012} to obtain a polynomial realization of the basis $\{\varphi_n^{(q)}\}$. For this we may define the operators $a $ and $a^*$ on the space $\mathcal{H}$ by
\begin{equation}
  a\varphi_n ^{(q)}= \sqrt{2x_{n}^{q,\alpha }}\varphi_{n-1}^{(q)},\;\; a\varphi_0^{(q)} =0, \qquad
  a^{*}\varphi_n^{(q)} = \sqrt{2x_{n+1}^{q,\alpha }}\varphi_{n+1}^{(q)}.
\label{abst-ops}
\end{equation}
Using $a $ and $a^*$, it is  possible to identify the basis vectors $\{\varphi_n^{(q)}\}_{n=0}^\infty$  with another family of real orthogonal polynomials. We now define the operators,
\begin{equation}
  Q= \frac 1{\sqrt{2}}\; [a + a^{*} ]\; , \qquad
  P = \frac 1{i\sqrt{2}}\; [a - a^{*} ]\; ,
\label{eq:pos-mom-op}
\end{equation}
to play the roles of the  standard position and momentum operators. The operator $Q$ acts on the basis vectors $\varphi_n$ as
\begin{equation}
  Q \varphi_n^{(q)}= \sqrt{x_n^{q,\alpha }}\; \varphi_{n-1}^{(q)} + \sqrt{x_{n+1}^{q,\alpha }}\; \varphi_{n+1}^{(q)}\; .
\label{eq:pos-op-act}
\end{equation}
Here, the sum $\sum_{n=0}^\infty \dfrac 1{\sqrt{x_n^{q,\alpha } }}$ diverges, therefore, the operator $Q$ is essentially self-adjoint and hence has a unique self-adjoint extension, which we again denote by $Q$. Let $E_x , \; x\in \mathbb R$, be the spectral family of $Q$, so that,
$$ Q = \int_{-\infty}^\infty x \; dE_x \; .$$
Thus there exists a measure $\omega(dx)$ on $\mathbb R$ such that on the Hilbert space $L^2 (\mathbb R , \omega(dx))$,
$Q$ is just the operator of multiplication by $x$. Consequently, on this space, the relation (\ref{eq:pos-op-act})
takes the form
\begin{equation}
  x\varphi_n^{(q)} = \sqrt{x_n^{q,\alpha }} \varphi_{n-1}^{(q)} + \sqrt{x_{n+1}^{q,\alpha }} \varphi_{n+1}^{(q)}\; 
\label{eq:pos-op-act2}
\end{equation}
which is a two-term recursion relation, familiar from the theory of orthogonal polynomials. It follows that
$\omega(dx) = d\langle \varphi_0^{(q)}\vert E_x\vert \varphi_0^{(q)}\rangle $, and the $\varphi_n^{(q)}$ may be realized as the polynomials obtained
by orthonormalizing the sequence of monomials $1, x, x^2 , x^3 , \ldots\; , $ with respect to this measure
(using a Gram-Schmidt procedure). Let us use the notation $p_n (x)$ to write the vectors $\varphi_n^{(q)}$, when they are
so realized, as orthogonal polynomials in $L^2 (\mathbb R , \omega(dx))$. Then, 
\begin{equation}
    \langle \varphi_k^{(q)} , \varphi_n^{(q)}\rangle_{\mathcal{H}} = \int_{\mathbb R}  p_k (x)p_n (x)\; \omega(dx)= \delta_{k,n}\; .
\label{eq:poly-orthog}
\end{equation}
\begin{proposition}
A polynomial realization of the basis $\{\varphi_n^{(q)}\}_{n=0}^\infty$ is given by 
\begin{equation}
\varphi_n^{(q)}(x)=((-\alpha,q;q)_n)^{-1/2}Q_{n}\left(\frac{x\sqrt{1-q}}{2},\sqrt{\alpha},-\sqrt{\alpha}|q\right),\qquad  \label{phi_n^alpha,q}
\end{equation}
in terms of Al-Salam-Chihara polynomials
\begin{eqnarray}\label{qmeix}
Q_n(\xi;a,b|q):=\frac{(ab;q)_n}{a^n}\setlength\arraycolsep{1pt}
{}_3 \phi_2\left(\begin{matrix}q^{-n},ae^{i\theta},ae^{-i\theta} \\ab,0  \end{matrix}\left|q;q\right.\right);\quad \xi=\cos \theta.
\end{eqnarray}
\end{proposition}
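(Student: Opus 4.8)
The plan is to realize $\{\varphi_n^{(q)}\}$ as polynomials in the spectral variable $x$ of $Q$ (as set up via the spectral theorem above) and then to match the three-term recurrence \eqref{eq:pos-op-act2} they satisfy against the standard recurrence of the Al-Salam-Chihara polynomials \eqref{qmeix}, after an affine change of variable and a renormalization.

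First I would pin down the structural facts that make the identification unambiguous. Since $0<q<1$ and $-1<\alpha<q$, the factor $1+\alpha q^{n-1}$ stays positive for every $n\geq1$, so $x_n^{q,\alpha}=(1+\alpha q^{n-1})[n]_q>0$ and the off-diagonal entries $\sqrt{x_n^{q,\alpha}}$ in \eqref{eq:pos-op-act2} are real and positive; moreover $x_n^{q,\alpha}\to(1-q)^{-1}$, which explains why $\sum_n(x_n^{q,\alpha})^{-1/2}$ diverges, so that $Q$ is essentially self-adjoint as noted above. Consequently \eqref{eq:pos-op-act2}, together with $\varphi_{-1}^{(q)}=0$ and the normalization that makes $\omega$ a probability measure (equivalently $\varphi_0^{(q)}\equiv1$ as a polynomial), determines the sequence $\{\varphi_n^{(q)}\}$ uniquely; so it suffices to check that the right-hand side of \eqref{phi_n^alpha,q} obeys the same recurrence with the same initial data.

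Next I would invoke the three-term recurrence of the polynomials in \eqref{qmeix},
\[
2\xi\, Q_n(\xi;a,b|q)=Q_{n+1}(\xi;a,b|q)+(a+b)q^n Q_n(\xi;a,b|q)+(1-q^n)(1-abq^{n-1})Q_{n-1}(\xi;a,b|q),
\]
with $Q_{-1}=0$, $Q_0=1$, so that $Q_n$ has leading coefficient $2^n$ in $\xi$. Specializing $a=\sqrt{\alpha}$ and $b=-\sqrt{\alpha}$ annihilates the middle term and, via $1-q^n=(1-q)[n]_q$, converts the last coefficient into $(1-q^n)(1+\alpha q^{n-1})=(1-q)\,x_n^{q,\alpha}$; note that $b=\bar a$ when $\alpha<0$ and $a,b\in\mathbb R$ when $\alpha\geq0$, so the recurrence coefficients are real in either regime. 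I would then set $\xi=\tfrac12 x\sqrt{1-q}$ and pass to the monic rescaling $\widehat Q_n(x)=2^{-n}Q_n\bigl(\tfrac12 x\sqrt{1-q};\sqrt{\alpha},-\sqrt{\alpha}|q\bigr)$, which satisfies
\[
x\,\widehat Q_n(x)=\frac{2}{\sqrt{1-q}}\,\widehat Q_{n+1}(x)+\frac{\sqrt{1-q}}{2}\,x_n^{q,\alpha}\,\widehat Q_{n-1}(x).
\]

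Finally, writing $\varphi_n^{(q)}(x)=c_n\widehat Q_n(x)$ and forcing the coefficient of $\varphi_{n-1}^{(q)}$ in the rescaled recurrence to equal $\sqrt{x_n^{q,\alpha}}$ pins down the ratio $c_n/c_{n-1}=2\bigl(\sqrt{1-q}\,\sqrt{x_n^{q,\alpha}}\bigr)^{-1}$; one checks that this single choice makes the coefficient of $\varphi_{n+1}^{(q)}$ automatically equal to $\sqrt{x_{n+1}^{q,\alpha}}$, so \eqref{eq:pos-op-act2} is reproduced. Telescoping the ratio and using $x_n^{q,\alpha}!=(-\alpha;q)_n[n]_q!=(-\alpha,q;q)_n/(1-q)^n$ from \eqref{x_n^gamma!} cancels the powers of $(1-q)$ and gives $c_n=c_0\,2^n/\sqrt{(-\alpha,q;q)_n}$; combined with $2^n\widehat Q_n=Q_n\bigl(\tfrac12 x\sqrt{1-q};\sqrt{\alpha},-\sqrt{\alpha}|q\bigr)$ and the initial condition $\varphi_0^{(q)}\equiv1$, which forces $c_0=1$, this is exactly \eqref{phi_n^alpha,q}. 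The only genuinely delicate step is the normalization bookkeeping — keeping track of the leading coefficient $2^n$ of the standard $Q_n$, of the $\sqrt{1-q}$ factors introduced by the change of variable, and of the monic-to-orthonormal passage — whereas matching the recurrence itself is immediate once one guesses the parameters $a=\sqrt{\alpha}$, $b=-\sqrt{\alpha}$.
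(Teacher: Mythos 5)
Your proposal is correct and follows essentially the same route as the paper: both identify $\varphi_n^{(q)}$ by matching the three-term recurrence \eqref{eq:pos-op-act2} against the Al-Salam-Chihara recurrence specialized to $a=-b=\sqrt{\alpha}$ (so that $ab=-\alpha$ and the middle term vanishes), after the substitution $\xi=\tfrac{1}{2}x\sqrt{1-q}$ and a normalization using $x_n^{q,\alpha}!=(-\alpha,q;q)_n/(1-q)^n$. Your version is somewhat more careful than the paper's (uniqueness of the orthonormal sequence, reality of the coefficients when $\alpha<0$, and the consistency check on both recurrence coefficients), but the underlying argument is the same.
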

\begin{proof}
By setting $P_n(x)=\sqrt{x_n^{q,\alpha }!}\;\varphi_n^{(q)}(x)$ in Eq. \eqref{eq:pos-op-act2}, we obtain the relation
 \begin{eqnarray}
xP_{n}(x)=P_{n+1}(x)+(1+\alpha q^{n-1})[n]_q P_{n-1}(x). \label{Recu_H_m_b}
\end{eqnarray}
The latter one may be  compared with  the three terms recursive relation   involving the Al-Salam-Chihara polynomials (\cite{KS}, p.80, with  $a=-b,\: \alpha=a^2$) :
\begin{equation}
x 2^{-n}Q_n(x;a,-a|q)=2^{-(n+1)}Q_{n+1}(x;a,-a|q)+2^{-(n+1)}(1-q^n)(1+a^2q^{n-1})Q_{n-1}(x;a,-a|q)
\end{equation}
which enables us  to identify  $P_{n}(x)$ as 
\begin{eqnarray}
P_n(x)&=& (1-q)^{-n/2}Q_{n}\left(\frac{x\sqrt{1-q}}{2};a,-a|q\right). 
\end{eqnarray}
This ends the proof.
%where $P_{n}^{(\alpha ,q)}(x)$ are the $q-$Maixner-Pollaczek polynomials defined in \eqref{qmeix}.
\end{proof} 
We should note that the orthogonality measure $\omega_{q,\alpha }(x)dx$  for the polynomials  $\varphi_n^{(q)}(x)$, where  
\begin{equation}
\omega_{q,\alpha }(x):=\frac{(q,-\alpha;q)_{\infty}}{2\pi}\sqrt{\frac{1-q}{4-(1-q)x^2}}\left(\frac{g(x,1;q)g(x,-1;q)g(x,\sqrt{q};q)g(x,-\sqrt{q};q)}{g(x,\sqrt{\alpha};q)g(x,-\sqrt{\alpha};q)}\right),
\end{equation} 
is supported on the interval $\mathcal{I}_q=]\frac{-2}{\sqrt{1-q}},\frac{2}{\sqrt{1-q}}[$, where
\begin{equation}
g(x,\gamma;q)=\displaystyle\prod_{k=0}^{\infty}(1-\gamma x(1-q)^{1/2}q^k+\gamma^2q^{2k}).
\end{equation}
When $\alpha=0$, then Al-Salam-Chihara polynomials reduce to the continuous $q$-Hermite
polynomials $H_n(x|q)$ (\cite{KS}, p.115) whose orthogonality measure is the standard $q$-Gaussian measure on $\mathcal{I}_q$ given by
\begin{equation}
\omega_{q,0}(x)=\frac{\sqrt{1-q}}{\pi}\sin \theta \prod_{n=0}^{\infty} (1-q^n)|1-q^ne^{2i\theta}|^2 dx.
\end{equation} 
where $x\sqrt{1-q}=2\cos \theta$, $\theta\in[0,\pi]$.
\section{Bargmann-type integral transforms }
%Following \cite[p.4]{AI2012} the orthogonal polynomials $
% p_{n}(x)$ arising from the shift operators that are defined by the sequence $x_n$ may then be used to replace the %abstract ket vectors $\varphi_{n} $ in \eqref{NLCS_x_n,alpha,q}. Which  in our case, are  the
 %$q-$Meixner-Pollaczek polynomials  \eqref{qmeix}. In a such way, we can define the associated coherent states transform. %The latter one should map the Hilbert space $L^2(\mathcal{I}_q,\; \omega_{q,\alpha })$ onto the Hilbert space $\mathcal{A}%_{q,\alpha }^{2}(\mathbb{C})$.  The following result makes this statement more precise. 

The wave functions of our G$q$-CS belongs to the Hilbert space $\mathcal{H}_{q,\alpha}=L^2(\mathcal{I}_q,\omega_{q,\alpha }(x)dx)$ and are of the form
\begin{equation}
\vartheta_{z}^{q,\alpha}(x) :=\left( \mathcal{N}_{q,\alpha }(z\bar{z})\right) ^{-1/2}\sum\limits_{n= 0}^{\infty}\frac{\overline{z}^{n}}{\sqrt{x_{n}^{q,\alpha }!}}\varphi_n^{(q)}(x),\; x\in\mathbb{R}.   \label{5.1}
\end{equation}
Recalling that $\varphi_n^{(q)}(x)$ is given by its expression in  \eqref{phi_n^alpha,q}, Eq. \eqref{5.1} takes the form 
\begin{eqnarray}\label{NCSdem1}
\vartheta_{z}^{q,\alpha}(x)&=& \left(\mathcal{N}_{q,\alpha }(z\bar{z})\right) ^{-1/2}\sum\limits_{n= 0}^{\infty}\frac{\overline{z}^{n}}{\sqrt{(-\alpha;q)_n[n]_q!}}((-\alpha,q;q)_n)^{-1/2}Q_{n}\left(\frac{x\sqrt{1-q}}{2},\sqrt{\alpha},-\sqrt{\alpha}|q\right)\cr
&=&\left(\mathcal{N}_{q,\alpha }(z\bar{z})\right) ^{-1/2}\sum\limits_{n= 0}^{\infty}\frac{(\bar{z}\sqrt{1-q})^n}{(-\alpha,q;q)_n}Q_{n}\left(\frac{x\sqrt{1-q}}{2},\sqrt{\alpha},-\sqrt{\alpha}|q\right).
\end{eqnarray} 
Next, by using the generating function (\cite{KS}, p.81):
\begin{equation}
\sum\limits_{n= 0}^{\infty}\frac{t^n}{(ab,q;q)_n}Q_n(\xi;a,b|q)=\frac{1}{(te^{i\theta};q)_{\infty}}{}_2\phi_1\left(\begin{array}{c}ae^{i\theta},be^{i\theta}\\
 ab\end{array}\Big|q;te^{-i\theta}\right),\, \xi=\cos \theta
\end{equation}
for  parameters $\xi=\frac{x\sqrt{1-q}}{2},\: a=-b=\sqrt{\alpha}$  and $t=\bar{z}\sqrt{1-q}$, we arrive at the following result.
\begin{proposition} 
For $(q,\, \alpha)$ such that   $0< q <1$ and $-1<\alpha<q$, the wave function of the
$q$-CS \eqref{5.1} can be expressed as
\begin{eqnarray}\label{CScom}
\vartheta_{z}^{q,\alpha}(x)&=& \frac{\left(\mathcal{N}_{q,\alpha }(z\bar{z})\right) ^{-1/2}}{(\bar{z}\sqrt{1-q}e^{i\theta};q)_\infty}{}_2\phi_1\left(\begin{array}{c}\sqrt{\alpha}e^{i\theta},-\sqrt{\alpha}e^{i\theta}\\
 -\alpha\end{array}\Big|q;\bar{z}\sqrt{1-q}e^{-i\theta}\right)
\end{eqnarray}
where $x=\frac{2 }{\sqrt{1-q}}\cos\,\theta$ and $\mathcal{N}_{q,\alpha } (z\bar{z})$ is given by \eqref{norfac}.
\end{proposition}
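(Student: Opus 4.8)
The plan is to read off the closed form directly from the series \eqref{NCSdem1} by resumming it with the known generating function for Al-Salam--Chihara polynomials. The starting point is the polynomial realization \eqref{phi_n^alpha,q}, which lets one rewrite $\vartheta_z^{q,\alpha}(x)$ as $(\mathcal{N}_{q,\alpha}(z\bar z))^{-1/2}$ times $\sum_{n\ge 0}\bar z^n\big((-\alpha;q)_n[n]_q!\big)^{-1/2}((-\alpha,q;q)_n)^{-1/2}Q_n(\tfrac{x\sqrt{1-q}}{2};\sqrt{\alpha},-\sqrt{\alpha}|q)$. Using $[n]_q!=(q;q)_n(1-q)^{-n}$ and collecting the two factors $((-\alpha;q)_n)^{-1/2}$ together with the two factors $((q;q)_n)^{-1/2}$, the $n$-th coefficient simplifies to $(\bar z\sqrt{1-q})^n/(-\alpha,q;q)_n$. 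I would also record at this stage that with $a=\sqrt{\alpha}$, $b=-\sqrt{\alpha}$ one has $ab=-\alpha$, so $(ab,q;q)_n=(-\alpha,q;q)_n$; thus the sum has exactly the shape $\sum_{n\ge 0}t^n Q_n(\xi;a,b|q)/(ab,q;q)_n$ with $t=\bar z\sqrt{1-q}$ and $\xi=\cos\theta=\tfrac{x\sqrt{1-q}}{2}$.

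Next I would invoke the generating function (\cite{KS}, p.81) displayed just before the statement, with precisely these parameter values. It converts the sum into $(te^{i\theta};q)_\infty^{-1}\,{}_2\phi_1\big(ae^{i\theta},be^{i\theta};ab;q,te^{-i\theta}\big)$, and substituting back $t=\bar z\sqrt{1-q}$, $a=-b=\sqrt{\alpha}$ and multiplying by the prefactor $(\mathcal{N}_{q,\alpha}(z\bar z))^{-1/2}$ reproduces \eqref{CScom} verbatim.

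The only genuine point to check is legitimacy of the manipulation on $\mathbb{C}_q=\{z:z\bar z<(1-q)^{-1}\}$. For such $z$ we have $|t|=|\bar z|\sqrt{1-q}<1$, which is the region in which the Al-Salam--Chihara generating series converges absolutely (both the infinite product $(te^{i\theta};q)_\infty$ and the ${}_2\phi_1$ are then well defined), so the interchange of the absolutely convergent $n$-sum with the series defining $Q_n$ is justified. One should also note that when $-1<\alpha<0$ the quantity $\sqrt{\alpha}$ is imaginary; however every expression entering the identity depends on $\alpha$ only through $a^2=\alpha$, through $ab=-\alpha$, or through $ae^{i\theta}$ together with $be^{i\theta}=-ae^{i\theta}$, under which both $Q_n(\xi;a,-a|q)$ and the ${}_2\phi_1$ are invariant under $a\mapsto-a$ and hence are even, real-analytic functions of $\sqrt{\alpha}$ (indeed of $\alpha$). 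So both sides are real-analytic in $\alpha$ on $(-1,q)$, and since the identity holds for $\alpha\in[0,q)$ it extends to the whole range by analytic continuation. The main obstacle is therefore purely bookkeeping — matching the $q$-shifted factorials on the two sides and confirming that the chosen parameters are admissible in the cited generating function — rather than any substantive analytic difficulty.
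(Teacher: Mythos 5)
Your argument is correct and follows essentially the same route as the paper: substitute the Al-Salam--Chihara realization \eqref{phi_n^alpha,q} into the series \eqref{5.1}, simplify the $n$-th coefficient to $(\bar z\sqrt{1-q})^n/(-\alpha,q;q)_n$, and resum via the generating function from (\cite{KS}, p.~81) with $a=-b=\sqrt{\alpha}$, $t=\bar z\sqrt{1-q}$. Your added remarks on convergence for $|z|^2<(1-q)^{-1}$ and on the imaginary $\sqrt{\alpha}$ when $\alpha<0$ are worthwhile points of rigor that the paper passes over silently, but they do not change the method.
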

\textbf{Remark 4.1.} Note that by replacing in \eqref{CScom} $z\rightarrow z\sqrt{1-q}, \ \alpha \rightarrow -q^{2\nu}$ and $e^{i\theta}\rightarrow iq^{ix}$ with $\nu>0$ and letting  $q \to 1 $,  the G$q$-CS  $\vartheta_{z}^{q,\alpha}$   reduce to the coherent states (\cite{Bor}, p.14) :
\begin{equation}
\vartheta_{z}^{1,-q^{2\nu}}(x)=\frac{|z|^{\nu-\frac{1}{2}}}{\sqrt{\Gamma(2\nu)I_{2\nu-1}(2|z|)}} e^{i\bar{z}} {}_1F_1\left(\begin{array}{c}\nu+ix\\
 2\nu\end{array}\Big| -2i\bar{z}\right).
\end{equation}
where $I_\sigma$ denotes  the modified Bessel function (\cite{Watson}, p.77). These coherent states are of Barut–Girardello type  since they may be expanded as 
\begin{equation}
\vartheta_{z}^{1,-q^{2\nu}}(x) =\left(\Gamma(2\nu) |z|^{1-2\nu} I_{2\nu-1}(2|z|)\right)^{-1/2}\sum_{n=0}^{\infty} \dfrac{\bar{z}^n}{\sqrt{n!(2\nu)_n}}\varphi_n^{(\nu)} (x)
\end{equation}
where $\varphi_n^{(\nu)} (x):=P_n^{\nu}(x;\frac{\pi}{2})\sqrt{\frac{n!}{(2\nu)_n}}$ are  normalized Meixner-Pollaczek polynomials (\cite{KS}, p.37) and  $\left\lbrace\dfrac{z^{n}}{\sqrt{n!(2\nu)_n}}\right\rbrace_{n \geq 0}$ is an orthonormal basis of the space $\mathfrak{F}_{\nu}(\mathbb{C})$ of entire $d\mu_{\nu}-$square integrable functions, where
\begin{equation}
d\mu_{\nu}(z)=\frac{2}{\pi\Gamma(2\nu)}r^{2\nu-1}K_{1/2-\nu}(2r)rd\theta dr, \quad z=re^{i\theta}
\end{equation}
and $K_\sigma$ denotes the MacDonald function (\cite{Watson}, p.183). The space $\mathfrak{F}_{\nu}(\mathbb{C})$ have been used by Barut and Girardello \cite{Barut} to introduce  coherent states associated with non-compact groups.\medskip

The constructed G$q$-CS allows us to introduce a Bargmann-type transform as the associated coherent states transform    $\mathcal{B}_{\alpha}^{(q)}:\mathcal{H}_{q,\alpha}\rightarrow \mathcal{A}_{q,\alpha }^{2}(\mathbb{C})$ defined, as usual (see \cite{AGA}, p.27 for the general theory), by 
\begin{equation}\label{CSTdef}
\mathcal{B}_{\alpha}^{(q)}[\varphi](z)=(\mathcal{N}_{q,\alpha }(z\bar{z}))^{\tfrac{1}{2}}\langle \varphi,\,\vartheta_{z}^{q,\alpha}\rangle_{\mathcal{H}_{q,\alpha}},\;z\in \mathbb{C}_q.
\end{equation}

\begin{theorem}
For $(q,\, \alpha)$ such that   $0< q <1$ and $-1<\alpha<q$, $\mathcal{B}_{\alpha}^{(q)}:\mathcal{H}_{q,\alpha}\rightarrow \mathcal{A}_{q,\alpha }^{2}(\mathbb{C})$ is an isometric isomorphism  given by
\begin{equation}
 \mathcal{B}_{\alpha}^{(q)}[\varphi ](z)=\int_{\mathcal{I}_q}\mathcal{T}(z,\xi)\varphi(\xi)\omega_{q,\alpha }(\xi)d\xi,\quad z\in\mathbb{C}_q,\label{BAR}
\end{equation}%
with the integral kernel 
\begin{equation}
\mathcal{T}(z,\xi)=\frac{1}{(z\sqrt{1-q}e^{-i\theta};q)_\infty}{}_2\phi_1\left(\begin{array}{c}\sqrt{\alpha}e^{-i\theta},-\sqrt{\alpha}e^{-i\theta}\\
 -\alpha\end{array}\Big|q;z\sqrt{1-q}e^{i\theta}\right)
\end{equation}
and $\xi=\frac{2 }{\sqrt{1-q}}\cos\,\theta$. 
\end{theorem}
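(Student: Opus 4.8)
The plan is to verify the three assertions in turn: that $\mathcal{B}_{\alpha}^{(q)}$ maps into $\mathcal{A}_{q,\alpha}^2(\mathbb{C})$ and has the stated integral form, that it is isometric, and that it is surjective. First I would unwind the defining formula \eqref{CSTdef}: inserting the expansion \eqref{5.1} of $\vartheta_z^{q,\alpha}$ in terms of the orthonormal basis $\{\varphi_n^{(q)}\}$ and using sesquilinearity, the normalization factors $(\mathcal{N}_{q,\alpha}(z\bar z))^{\pm 1/2}$ cancel and one gets $\mathcal{B}_{\alpha}^{(q)}[\varphi](z)=\sum_{n\ge 0}\langle\varphi,\varphi_n^{(q)}\rangle_{\mathcal{H}_{q,\alpha}}\, z^n/\sqrt{x_n^{q,\alpha}!}$. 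This already shows the image is spanned by the monomials $z^n/\sqrt{x_n^{q,\alpha}!}$, i.e. lands in $\mathcal{A}_{q,\alpha}^2(\mathbb{C})$, and that the Fourier-type coefficients of $\mathcal{B}_{\alpha}^{(q)}[\varphi]$ in that basis are exactly the coefficients of $\varphi$ in $\{\varphi_n^{(q)}\}$ — which is simultaneously the isometry (by Parseval on both sides, using that $\{z^n/\sqrt{x_n^{q,\alpha}!}\}$ is an orthonormal basis of $\mathcal{A}_{q,\alpha}^2(\mathbb{C})$ as recorded after Remark~1) and the surjectivity (any $\sum c_n z^n/\sqrt{x_n^{q,\alpha}!}$ in the target is the image of $\sum c_n\varphi_n^{(q)}$).

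Next I would identify the integral kernel. Write $\langle\varphi,\vartheta_z^{q,\alpha}\rangle_{\mathcal{H}_{q,\alpha}}=\int_{\mathcal{I}_q}\varphi(\xi)\overline{\vartheta_z^{q,\alpha}(\xi)}\,\omega_{q,\alpha}(\xi)\,d\xi$, so that $\mathcal{T}(z,\xi)=(\mathcal{N}_{q,\alpha}(z\bar z))^{1/2}\overline{\vartheta_z^{q,\alpha}(\xi)}$. Now invoke Proposition~4 (equation \eqref{CScom}): taking complex conjugates there replaces $\bar z$ by $z$ and leaves the real quantities $\sqrt{\alpha},-\alpha,\xi=\tfrac{2}{\sqrt{1-q}}\cos\theta$, and hence $e^{\pm i\theta}$, transformed into $e^{\mp i\theta}$; the factor $(\mathcal{N}_{q,\alpha}(z\bar z))^{-1/2}$ cancels against the prefactor, yielding precisely the claimed $\mathcal{T}(z,\xi)=\dfrac{1}{(z\sqrt{1-q}e^{-i\theta};q)_\infty}\,{}_2\phi_1\!\left(\sqrt{\alpha}e^{-i\theta},-\sqrt{\alpha}e^{-i\theta};-\alpha\,\big|\,q;z\sqrt{1-q}e^{i\theta}\right)$. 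The fact that this is a genuine integral representation — that summation and integration may be interchanged — follows from the uniform convergence of the defining series for $z$ in compact subsets of $\mathbb{C}_q$ together with the $\omega_{q,\alpha}$-integrability of the polynomials $\varphi_n^{(q)}$ on the bounded interval $\mathcal{I}_q$.

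The only genuinely non-routine point is the interchange of sum and integral and the justification that $\mathcal{B}_{\alpha}^{(q)}[\varphi]$ defined by the pointwise series actually lies in $\mathcal{A}_{q,\alpha}^2(\mathbb{C})$ with the right norm; this is where I would be careful. The cleanest argument is the standard coherent-state one: since $\{\vartheta_z^{q,\alpha}\}$ resolves the identity on $\mathcal{H}$ (Proposition~1) and $\{\varphi_n^{(q)}\}$ is an orthonormal basis, the map $\varphi\mapsto(\langle\varphi,\varphi_n^{(q)}\rangle)_n$ is a unitary onto $\ell^2$, and composing with the unitary $\ell^2\to\mathcal{A}_{q,\alpha}^2(\mathbb{C})$ sending $e_n\mapsto z^n/\sqrt{x_n^{q,\alpha}!}$ gives exactly $\mathcal{B}_{\alpha}^{(q)}$; this packages isometry, surjectivity and well-definedness at once, and the kernel computation above then merely re-expresses the same operator in integral form. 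I expect the write-up to be short, with the bulk of the words spent on the kernel identification via Proposition~4 and on citing the orthonormal-basis property of the monomials in $\mathcal{A}_{q,\alpha}^2(\mathbb{C})$.
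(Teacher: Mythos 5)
The paper states this theorem without any proof, so there is nothing to compare against; judged on its own, your argument is correct and is the natural one, assembling exactly the ingredients the paper has already prepared: the moment identity $\int_0^\infty r^{2n}\rho_{q,\alpha}(dr)=x_n^{q,\alpha}!$ from Proposition~1 makes $\{z^n/\sqrt{x_n^{q,\alpha}!}\}$ an orthonormal system whose closed span is $\mathcal{A}_{q,\alpha}^2(\mathbb{C})$ by definition, the expansion $\mathcal{B}_{\alpha}^{(q)}[\varphi](z)=\sum_n\langle\varphi,\varphi_n^{(q)}\rangle\,z^n/\sqrt{x_n^{q,\alpha}!}$ gives isometry and surjectivity by Parseval, and the kernel is $(\mathcal{N}_{q,\alpha}(z\bar z))^{1/2}\,\overline{\vartheta_z^{q,\alpha}(\xi)}$ read off from the closed form \eqref{CScom}. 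One small point worth making explicit in the conjugation step: for $-1<\alpha<0$ the number $\sqrt{\alpha}$ is purely imaginary, so $\overline{\sqrt{\alpha}\,e^{i\theta}}=-\sqrt{\alpha}\,e^{-i\theta}$ rather than $\sqrt{\alpha}\,e^{-i\theta}$; the stated kernel is still obtained because the pair of numerator parameters $\{\sqrt{\alpha}e^{-i\theta},-\sqrt{\alpha}e^{-i\theta}\}$ is invariant as a set and the ${}_2\phi_1$ is symmetric in its numerator parameters. With that remark added, your write-up is complete.
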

\begin{corollary}
For $(q,\, \alpha)$ such that   $0< q <1$ and $-1<\alpha<q$, the Al-Salam-Chihara polynomials have the following integral representation 
\begin{equation}
Q_n\left(\frac{\xi\sqrt{1-q}}{2};\sqrt{\alpha},-\sqrt{\alpha}|q\right)=\int_{\mathbb{C}_q} (z\sqrt{1-q})^n \mathcal{T}(z,\xi) \nu_{q,\alpha}(d^2z),\quad \xi\in \mathcal{I}_q.
\end{equation}
\end{corollary}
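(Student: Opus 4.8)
The plan is to recognise the asserted identity as the concrete form of the inverse of the Bargmann-type transform $\mathcal{B}_\alpha^{(q)}$ applied to the monomial $z^n$. Feeding \eqref{5.1} into \eqref{CSTdef} and using the orthonormality of $\{\varphi_n^{(q)}\}$ in $\mathcal{H}_{q,\alpha}$ gives at once $\mathcal{B}_\alpha^{(q)}[\varphi_n^{(q)}](z)=z^n/\sqrt{x_n^{q,\alpha}!}$; since $x_n^{q,\alpha}!=(-\alpha;q)_n[n]_q!=(-\alpha,q;q)_n(1-q)^{-n}$, this equals $(z\sqrt{1-q})^n/\sqrt{(-\alpha,q;q)_n}$, so $\mathcal{B}_\alpha^{(q)}$ is exactly the unitary identifying the orthonormal basis $\{\varphi_n^{(q)}\}$ of $\mathcal{H}_{q,\alpha}$ with the orthonormal basis $\{z^n/\sqrt{x_n^{q,\alpha}!}\}$ of $\mathcal{A}_{q,\alpha}^2(\mathbb{C})$ underlying the kernel \eqref{RKdem1}. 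Hence $(\mathcal{B}_\alpha^{(q)})^{-1}$ sends $(z\sqrt{1-q})^n=\sqrt{(-\alpha,q;q)_n}\,z^n/\sqrt{x_n^{q,\alpha}!}$ to $\sqrt{(-\alpha,q;q)_n}\,\varphi_n^{(q)}$, which by the polynomial realization \eqref{phi_n^alpha,q} is $Q_n(\tfrac{\xi\sqrt{1-q}}{2};\sqrt\alpha,-\sqrt\alpha|q)$; it remains only to identify this inverse with the integral operator on the right-hand side of the assertion.

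For that I would work directly with the kernel. Comparing \eqref{CSTdef} and \eqref{BAR} gives $\mathcal{T}(z,\xi)=(\mathcal{N}_{q,\alpha}(z\bar z))^{1/2}\,\overline{\vartheta_z^{q,\alpha}(\xi)}$, and conjugating \eqref{NCSdem1} --- in which the Al-Salam--Chihara polynomials and the coefficients $(-\alpha,q;q)_n$ are real --- produces the expansion $\mathcal{T}(z,\xi)=\sum_{m\ge0}(z\sqrt{1-q})^m(-\alpha,q;q)_m^{-1}\,Q_m(\tfrac{\xi\sqrt{1-q}}{2};\sqrt\alpha,-\sqrt\alpha|q)$, a power series in $z$ with real coefficients. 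Multiplying by $\overline{(z\sqrt{1-q})^n}=(\bar z\sqrt{1-q})^n$ and integrating against $\nu_{q,\alpha}$, the angular integration kills all terms with $m\ne n$ while the radial integration contributes $\int_{\mathbb{C}_q}z^n\bar z^n\,\nu_{q,\alpha}(d^2z)=x_n^{q,\alpha}!$, which is precisely the moment identity established in the proof of the first proposition. The surviving constants $(1-q)^n\cdot x_n^{q,\alpha}!\cdot(-\alpha,q;q)_n^{-1}$ collapse to $1$, leaving $Q_n(\tfrac{\xi\sqrt{1-q}}{2};\sqrt\alpha,-\sqrt\alpha|q)$; since $\mathbb{C}_q$ and $\nu_{q,\alpha}$ are invariant under $z\mapsto\bar z$ and the Taylor coefficients of $\mathcal{T}(\cdot,\xi)$ are real, the conjugation may be transferred between the monomial and the kernel, giving the form written in the statement.

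The only point that genuinely needs care --- and which I would regard as the main obstacle --- is the interchange of the sum over $m$ with the integration against $\nu_{q,\alpha}$, i.e.\ the convergence of the Al-Salam--Chihara series on the whole support of $\nu_{q,\alpha}$, which touches the boundary of $\mathbb{C}_q$ where the closed form of $\mathcal{T}$ is singular. The clean way to handle this is to read $\mathcal{T}(\cdot,\xi)$ as an element of $L^2(\mathbb{C}_q,\nu_{q,\alpha})$ and the right-hand side as the corresponding pairing --- exactly the setting of the preceding theorem, where the orthonormal systems $\{z^m/\sqrt{x_m^{q,\alpha}!}\}$ and $\{\varphi_m^{(q)}\}$ already encode all the needed convergence --- so that the term-by-term computation above is really a Parseval identity for the pair of bases. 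With that understood, the remaining steps are elementary and the representation follows.
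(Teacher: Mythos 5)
The paper states this corollary with no proof at all, so the only benchmark is the evident intention that it drop out of the preceding theorem. Your strategy is exactly that natural argument: expand the kernel as $\mathcal{T}(z,\xi)=\sum_{m\ge 0}(z\sqrt{1-q})^{m}(-\alpha,q;q)_m^{-1}Q_m\bigl(\tfrac{\xi\sqrt{1-q}}{2};\sqrt{\alpha},-\sqrt{\alpha}|q\bigr)$ via the generating function, integrate term by term, let the angular integration remove the off-diagonal terms, and evaluate the diagonal term with the moment identity $\int_0^{\infty}r^{2n}\rho_{q,\alpha}(dr)=x_n^{q,\alpha}!$. Your bookkeeping of the constants, $(1-q)^n\,x_n^{q,\alpha}!\,(-\alpha,q;q)_n^{-1}=1$, is correct.

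The genuine problem is your final step, where you assert that the complex conjugation ``may be transferred between the monomial and the kernel'' so as to land on the formula exactly as printed. It cannot. What your computation establishes is $Q_n(\cdot)=\int_{\mathbb{C}_q}\overline{(z\sqrt{1-q})^{n}}\,\mathcal{T}(z,\xi)\,\nu_{q,\alpha}(d^2z)$; the substitution $z\mapsto\bar z$ (under which $\nu_{q,\alpha}$ is indeed invariant, and $\mathcal{T}(\bar z,\xi)=\overline{\mathcal{T}(z,\xi)}$ because the Taylor coefficients are real) converts this into $\int_{\mathbb{C}_q}(z\sqrt{1-q})^{n}\,\overline{\mathcal{T}(z,\xi)}\,\nu_{q,\alpha}(d^2z)$ --- it moves the bar onto the kernel, it does not remove it. In fact the identity literally as displayed cannot hold for $n\ge 1$: the function $z\mapsto z^{n}\mathcal{T}(z,\xi)$ is holomorphic on $\mathbb{C}_q$, so its mean over any circle $|z|=r$ is its value at the origin, namely $0$ for $n\ge 1$, and the rotation invariance of $\nu_{q,\alpha}$ then forces the printed right-hand side to vanish identically. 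The corollary needs a conjugate on one of the two factors; you have effectively proved that corrected statement and should say so, rather than claim to recover the uncorrected one. A secondary caveat: your proposed rigorous justification of the sum--integral interchange, reading $\mathcal{T}(\cdot,\xi)$ as an element of $L^{2}(\mathbb{C}_q,\nu_{q,\alpha})$ and invoking Parseval, is shaky, since $\|\mathcal{T}(\cdot,\xi)\|^{2}_{L^{2}(\nu_{q,\alpha})}=\sum_{m}|\varphi_m^{(q)}(\xi)|^{2}$, which diverges for a.e.\ $\xi$ in the support of the absolutely continuous orthogonality measure $\omega_{q,\alpha}$. It is cleaner to use that $\rho_{q,\alpha}$ is purely atomic with atoms at $r=(1-q)^{-1/2}q^{k/2}$: on every circle with $k\ge 1$ the series converges absolutely and uniformly, and only the single boundary atom $k=0$, where $|z|\sqrt{1-q}=1$, needs a separate limiting argument.
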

\begin{corollary}
For $\alpha=0$, $\mathcal{B}_{0}^{(q)}$ maps  $\mathcal{H}_{q,0}$ onto the Arik-Coon space $\mathcal{A}_{q,0}^{2}(\mathbb{C})$, as
\begin{equation}
 \mathcal{B}_{0}^{(q)}[\varphi ](z)=\int_{\mathcal{I}_q}\frac{1}{(z\sqrt{1-q}e^{-i\theta},z\sqrt{1-q}e^{i\theta};q)_\infty}\varphi(\xi)\omega_{q,0}(\xi)d\xi,\quad \xi=\frac{2 }{\sqrt{1-q}}\cos\,\theta\label{BAR0}
\end{equation}%
for every $z\in\mathbb{C}_q$. 
\end{corollary}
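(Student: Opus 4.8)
The plan is to obtain the corollary as a direct specialization of the Theorem to $\alpha = 0$, the only genuine computation being the collapse of the degenerate basic hypergeometric factor in the kernel $\mathcal{T}(z,\xi)$.

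First I would invoke the Theorem to record that $\mathcal{B}_\alpha^{(q)}$ is an isometric isomorphism from $\mathcal{H}_{q,\alpha}$ onto $\mathcal{A}_{q,\alpha}^2(\mathbb{C})$ with integral kernel $\mathcal{T}(z,\xi)$, as given by \eqref{BAR}. Putting $\alpha = 0$, the target space $\mathcal{A}_{q,0}^2(\mathbb{C})$ is exactly the Arik-Coon space identified after \eqref{RKdem1}, and the weight $\omega_{q,0}$ is the $q$-Gaussian measure displayed at the close of Section 3. Thus the corollary reduces to verifying that $\mathcal{T}(z,\xi)$ at $\alpha = 0$ equals the symmetric product appearing in \eqref{BAR0}.

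For this I would inspect the ${}_2\phi_1$ factor of $\mathcal{T}(z,\xi)$. At $\alpha = 0$ both numerator parameters $\pm\sqrt{\alpha}\,e^{-i\theta}$ and the denominator parameter $-\alpha$ vanish. Using the defining series of ${}_m\phi_s$ with $m=2$, $s=1$ — for which the sign factor $(-1)^{(1+s-m)k}$ and the power $q^{(1+s-m)\binom{k}{2}}$ are both trivial since $1+s-m=0$ — together with the convention $(0;q)_k = 1$, the $k$-th term reduces to $\eta^k/(q;q)_k$ with $\eta = z\sqrt{1-q}\,e^{i\theta}$. By Euler's identity $\sum_{k\geq 0}\eta^k/(q;q)_k = 1/(\eta;q)_\infty$, the same summation underlying \eqref{expodef}, this gives ${}_2\phi_1(0,0;0\,|\,q;z\sqrt{1-q}\,e^{i\theta}) = 1/(z\sqrt{1-q}\,e^{i\theta};q)_\infty$. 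Multiplying by the prefactor $1/(z\sqrt{1-q}\,e^{-i\theta};q)_\infty$ yields the product kernel $1/(z\sqrt{1-q}\,e^{-i\theta},\,z\sqrt{1-q}\,e^{i\theta};q)_\infty$, and inserting it into \eqref{BAR} produces \eqref{BAR0}.

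The argument is essentially bookkeeping, so no serious obstacle arises; the single point needing care is the legitimacy of Euler's summation on $\mathbb{C}_q$. Since $z \in \mathbb{C}_q$ means $z\bar{z} < (1-q)^{-1}$, one has $|z\sqrt{1-q}\,e^{\pm i\theta}| < 1$, so both infinite products $(z\sqrt{1-q}\,e^{\pm i\theta};q)_\infty$ are nonvanishing and the degenerate series converges, making the manipulation valid for every $z \in \mathbb{C}_q$.
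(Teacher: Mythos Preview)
Your proposal is correct and is exactly the specialization the paper has in mind: the corollary is stated without proof, as an immediate consequence of the Theorem with $\alpha=0$, and your reduction of the degenerate ${}_2\phi_1$ via $(0;q)_k=1$ and Euler's identity $\sum_{k\ge 0}\eta^k/(q;q)_k=1/(\eta;q)_\infty$ supplies precisely the one missing detail.
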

\textbf{Remark 4.2.} By replacing  $\xi$ by $\xi/\sqrt{2}$ in \eqref{BAR0}, $\mathcal{B}_{0}^{(q)}$ coincides with the $q$-deformed Bargmann-type transform constructed in \cite{SOZ18}.\bigskip\\
\textbf{Acknowledgements}

The authors would like to thank the Moroccan Association of Harmonic Analysis and Spectral Geometry.

%$[?]$ Andrei D. Polyanin, Valentin F. Zaitsev, Handbook of exact solutions for ordinary differential equations, 2nd ed., A CRC Press Company
%Boca Raton London New York Washington, D.C.


\begin{thebibliography}{99}
\bibitem{schro} E.~Schr\"{o}dinger, Die Naturwissenschaften, \textbf{14} 
(1926), 664.
	
\bibitem{Birula} Z. Bialynicki-Birula, Properties of the generalized coherent state, \textit{Phys. Rev.} \textbf{173} (1968), 1207.
\bibitem{Stoler} D. Stoler, Generalized Coherent States, \textit{Phys. Rev. D.} \textbf{4} (1971), 2309.

\bibitem{TG65} U. M. Titulaer and R. J. Glauber, Correlation functions for coherent fields, \textit{Phys. Rev.} \textbf{140} (1965), 676.
\bibitem{dodo02} V. V. Dodonov, Purity-and entropy-bounded uncertainty relations for mixed quantum states, \textit{J. Opt. B: Quantum Semiclass. Opt.%
} \textbf{4} (2002), 98.
\bibitem{KS85} J. R. Klauder and B. S. Skagerstam, Coherent States Applications
in Physics and Mathematics, World Scientific, Singapore, 1985.
\bibitem{pere86} A. Perelomov, Generalized coherent states and their applications, Springer-Verlag, Berlin, 1986.
\bibitem{ACO} 
    M. Arik and D. D. Coon: Hilbert space of analytic function and generalized coherent states, \textit{J. Math. Phys}. \textbf{17} (1976), 524.
\bibitem{Bi89} L. C. Biedenharn, The quantum group {$\mathrm{SU}_q(2)$} and a {$q$}-analogue of the boson operators, \textit{J. Phys. A.} \textbf{22} (1989), 873.

\bibitem{Mac89} A. J. MacFarlane, On $q$-analogues of the quantum harmonic oscillator and the quantum group $\mathrm{SU}_q(2)$, \textit{J. Phys. A.} \textbf{22} (1989),  4581.
 

\bibitem{Solo94} A. I. Solomon, Optimal signal-to-quantum noise ratio for deformed photons, \textit{Phys. Lett. A.} \textbf{188} (1994), 215.
\bibitem{JKAS94} J. Katriel and A. I. Solomon, Nonideal lasers, nonclassical light, and deformed photon states, \textit{Phys. Rev. A} \textbf{49} (1994), 5149.
\bibitem{KS} R. Koekoek and R. Swarttouw, The Askey-scheme of hypergeometric orthogonal polynomials and its q-analogues, \textit{Reports of the Faculty of Technical Mathematics and Informatics no. 98-17, Delft University of Technology,} Delft, 1998.
\bibitem{Ism29} M. E. H. Ismail, Classical and quantum orthogonal polynomials in one variable, Encyclopedia of Mathematics and its Applications, vol. 98, Cambridge University Press, Cambridge, 2009.

\bibitem{GR} G. Gasper and M. Rahman, Basic hypergeometric series. \textit{Cambridge University Press}, Cambridge, 2004.

\bibitem{AI2012} S. T. Ali and M. E. H. Ismail,  Some orthogonal polynomials arising from
coherent states. \textit{J Phys A: Math Theor}. \textbf{45} (2012), no. 12, 125203, 16 pp.
\bibitem{SOZ18} S. Arjika, O. El Moize and Z. Mouayn, Une $q$-d\'eformation de la transformation de Bargmann vraie-polyanalytique, \textit{C.
R. Acad. Sci. Paris.} \textbf{356} (2018), 903.
\bibitem{Bor}V. V. Borzov and E. V. Damaskinsky,  Generalized coherent states for oscillators connected with Meixner and Meixner-Pollaczek polynomials. \textit{J. Math. Sci}. \textbf{136} (2006), 3564. 
\bibitem{NMT} A. Nobuhiro, B. Marek and  H. Takahiro, Radial Bargmann representation for the Fock space of type B.\textit{ J. Math. Phys}. \textbf{57 }(2016), 021702, 13 pp.
\bibitem{VM95} H. Van Leeuwen  and H. Maassen, A $q$-deformation of the Gauss distribution, \textit{J. Math. Phys.} \textbf{36} (1995), 4743.
\bibitem{Watson} G. N. Watson, Treatise on the Theory of Bessel Functions, Cambridge University Press, Cambridge, UK,
1958.
\bibitem{Barut} A. O. Barut and L. Girardello, New “coherent states” associated with non-compact groups,
\textit{Communications in Mathematical Physics}. \textbf{21} (1971),  41.
\bibitem{AGA} S. T. Ali, J. P. Antoine and J. P. Gazeau, Coherent
states, Wavelets and their Generalizations, second edition, Springer
Science+Business Media New York, 2014.






\end{thebibliography}
\end{document}